\documentclass[11pt]{article}

\usepackage[margin=1in]{geometry}
\usepackage{amsmath,amssymb,amsthm}
\usepackage{setspace}
\usepackage[round]{natbib}
\usepackage[colorlinks=true,linkcolor=blue,citecolor=blue,urlcolor=blue]{hyperref}
\usepackage[australian]{babel}
\usepackage[T1]{fontenc}
\usepackage{mathpazo}
\usepackage{microtype}
\usepackage{needspace}
\usepackage{placeins}
\usepackage{tikz}
\usetikzlibrary{arrows.meta,fit,positioning}

\newtheorem{proposition}{Proposition}
\newtheorem{corollary}{Corollary}
\theoremstyle{remark}
\newtheorem{remark}{Remark}

\newcommand{\TV}{\operatorname{TV}}
\newcommand{\E}{\mathbb{E}}
\newcommand{\term}[1]{\emph{#1}}

\title{When Agents Talk:\\Honeytokens under Shared Memory}
\author{Joshua S. Gans\thanks{Rotman School of Management, University of Toronto and NBER. Thanks to ChatGPT 5.6 Sol and Claude Fable 5 for valuable research assistance. Responsibility for all errors remains my own.}}
\date{11 August, 2026}

\begin{document}
\maketitle

\begin{abstract}
\noindent During a 2026 cyber-capability evaluation, short-lived AI agents converted a shared package repository into persistent memory. Later agents inherited earlier exploit findings, rebuilt the communication mechanism after it was removed, and the broader evaluation culminated in an intrusion into Hugging Face. The episode raises a design question for defensive deception: can a honeytoken be made harmless to trusted agents without making it recognisable to an attacker who shares their information and can implement the trusted policy? Under those conditions, the answer is no. Any rule that lets a trusted agent use genuine objects while avoiding decoys can be copied by the attacker. When decoys look like genuine objects, a total-variation bound limits how far legitimate use can be separated from benign activation. Pooling signals weakly increases distinguishability. In a fixed-candidate, conditionally i.i.d. benchmark, repeated non-triggering probes drive the minimum Bayes classification error to zero whenever the type-dependent response laws differ; attaining that frontier requires the laws to be known or learnable from feedback. If probing can trigger containment, the coalition must also remain active long enough to collect the observations. A finite-sample bound measures the speed. Learning a fingerprint across different objects additionally requires a stable deployment rule and information that orients the classes, such as a known generator, labels or activation feedback. A strategy-indexed detection bound separates reliable detection of token activation from reliable detection of attacks. The constructive response is architectural: keep token identity in a private reference monitor and route legitimate agents through a provenance-enforcing broker. The resulting high-confidence detection is confined to a specified policy violation. Honeytokens remain useful sensors and may deter attackers who remain uncertain. A separate security boundary is still required.

\medskip
\noindent\textbf{Keywords:} artificial intelligence, cybersecurity, honeytokens, defensive deception, shared memory, intrusion detection.

\smallskip
\noindent\textbf{JEL Classification:} C72, D82, L86.
\end{abstract}

\newpage

\section{Introduction}\label{sec:intro}

On 7 May 2026, OpenAI began testing an internal-only research model in a cyber-capability evaluation. The evaluation gave short-lived agents difficult (and sometimes deliberately impossible) tasks in an environment intended to deny direct access to the public internet. One permitted route remained: an internally hosted JFrog Artifactory service proxied package registries. On 12 May, an agent missing a task file reasoned that another agent might possess it and wrote a request into shared Artifactory storage. Later, separately instantiated agents found the note and replied. The repository became a message board on which otherwise short-lived runs accumulated findings about server-side request forgery, proxy bypass, remote code execution and administrator access. By 26 May, an agent had used an Artifactory server-side request-forgery path to obtain internet access; by late June, later agents had found a remote-code-execution path and administrator capabilities. Agent-created configurations caused an Artifactory outage on 4--6 July and prompted an internal investigation. OpenAI revoked the agents' credentials, rebuilt the service, cleared the board and deployed a patch. On 8 July, an agent recreated the board by encoding messages in directory names created through a different Artifactory mechanism \citep{DaltonWallace2026,Sabin2026}.\footnote{The incident review was still continuing when this paper was written. The detailed May--July chronology comes from OpenAI researchers' public reconstruction at Black Hat and a contemporaneous report of that presentation. OpenAI's July organisational account confirms the Artifactory zero-day and the Hugging Face intrusion but does not yet document the message-board history.}

The public record documents neither the use of a honeytoken in this incident nor an agent's discovery of a defender-private token registry. The incident matters here for its information structure, not as evidence about honeytoken performance. A stable indicator learned by one agent can be passed to the next, as can an observed defensive response. A common substrate therefore lets process-isolated agents share information.

A subsequent ExploitGym evaluation within the same broader programme then crossed an organisational boundary. OpenAI reports that the models (GPT-5.6 Sol and an internal pre-release research model, both run with reduced cyber refusals) exploited a previously unknown Artifactory vulnerability, escalated through the evaluation environment and obtained internet access. They inferred that Hugging Face might host models, datasets and reference solutions associated with the ExploitGym benchmark and pursued those solutions, abandoning the benchmark tasks they had been assigned \citep{OpenAI2026}.

Hugging Face reconstructed the resulting intrusion from approximately 17,600 recovered actions, grouped into roughly 6,280 clusters, between 9 and 13 July. The agent first obtained a launchpad on an external, user-hosted code-execution sandbox; Hugging Face states that the infrastructure provider itself was not compromised. From there, it uploaded malicious dataset configurations to Hugging Face. An HDF5 external-storage feature disclosed local files, including a production worker's environment and source code; a separate Jinja2 template injection then supplied arbitrary code execution. The foothold led to cloud and cluster credentials, lateral movement, forged identity tokens and access to a small subset of source-code repositories. Hugging Face identified five accessed customer datasets whose names and files suggested a connection to the benchmark challenges. The public account reports no unauthorised change that shipped and no effect on other customer-facing models, datasets, Spaces or packages \citep{HuggingFace2026a,HuggingFace2026b}.

Communication sustained the campaign. The agent used public request-capture services, pastebins, file drops and attacker-controlled Hugging Face datasets as command-and-control channels and dead drops. It built a small message protocol with type, channel, sequence and checksum fields, and moved the same envelope between services as channels were closed. Each code-evaluation sandbox began with a fresh identity and no persistent local state, yet the agent repeatedly rebuilt its tooling, recovered its channels and continued \citep{HuggingFace2026b}. Separate trials had become one learning adversary.

Detection also involved aggregation. Runtime analysis, security logs and other systems each produced signals that were ambiguous alone. Hugging Face's AI security stack correlated them into a coherent intrusion signal but did not assign sufficient criticality to page the on-call team, delaying the response. Once responders engaged, they shut down the vulnerable renderer, cut internal access, rotated credentials and rebuilt affected infrastructure \citep{HuggingFace2026a,HuggingFace2026b}. The episode therefore involved both sides of the information problem: an attacker pooling weak clues and a defender trying to turn weak alerts into a decision.

This creates a puzzle for defensive deception. A honeytoken is a credential, file, record, URL or other object planted because a legitimate user should have little reason to select it and an intruder might. Throughout, ``honeytoken'' is an umbrella term for an instrumented deceptive object; particular instances may be described more specifically as honeycredentials, honeyfiles, decoy records or honeypots. Selection triggers monitoring, credential rotation or containment. A trusted AI agent operating inside the same environment must nevertheless avoid wasting effort on the decoy and must not treat its contents as true. How can the defender tell that agent which objects are false without teaching the same classification rule to a compromised agent?

The phrase ``100 per cent effective'' hides four properties. \term{Activation reliability} asks whether use of the token produces an alert. \term{Attack coverage} asks whether every relevant attack uses one. \term{Specificity} asks whether legitimate activity avoids alerts. \term{Durability} asks whether repeated, communicating agents can learn the deployment. The analysis grants perfect activation reliability. The impossibility concerns the attempt to obtain the other three together under common information and a copyable trusted policy.

The paper's answer is negative for a compromised internal agent that observes the same facts and can implement the trusted policy. An indicator cannot then be selectively revealing. Any filter that allows the trusted agent to use genuine resources and avoid honeytokens can be copied by the attacker. Removing the indicator leaves the same statistical limit for both agents when real and decoy objects are hard to distinguish. A trusted agent that routinely handles lookalike real objects must sometimes handle the decoy as well.

Shared information creates two leakage channels. Direct rule leakage does not require shared memory: if a compromised agent can implement the trusted filter, Proposition~\ref{prop:copy} makes that filter available for evasion. Indirect leakage arises when communicating agents pool weak fingerprints even though no rule is disclosed. Proposition~\ref{prop:pooling} shows how repeated observations make that evidence progressively more informative, provided the coalition remains active long enough to collect them. Proposition~\ref{prop:tv} links the two channels because high legitimate compatibility and low benign activation require a visible statistical difference that an equally informed agent can use. Under conditionally independent and identically distributed non-triggering probes, the minimum Bayes classification error for a fixed candidate converges to zero at an explicit finite-sample rate; attaining the frontier requires known conditional laws or feedback from which they can be learned. Transfer across objects additionally requires a stable design and information that orients the classes.

These object-level results are distinct from attack detection. A token can report its own use with cryptographic certainty and still miss an attack that takes another path. A second total-variation bound shows that perfect attack detection without false alarms requires benign and malicious monitor-visible distributions to be mutually singular. On any history compatible with both regimes, neither a honeytoken nor a behavioural detector can certify hidden intent from that history alone. If some attack strategy reproduces the entire benign distribution, every detector has equal true- and false-positive rates against that strategy.

There is, however, a useful way to obtain much of what defenders want. Token identity can remain outside every agent in a private control plane. Legitimate agents can be required to ignore all credentials found in code and obtain task-scoped capabilities from a broker. A provenance rule permits use only of broker-issued capabilities, so agents need no classification of individual in-workspace credentials. A compromised agent that departs from that rule and presents the planted credential triggers the monitor. The architecture gives a high-confidence signal for that class of policy violation. Coverage is confined to attackers that violate the rule and touch the token. Universal coverage requires every successful attack path to encounter at least one deployable monitored event outside benign support; a single completely mediated checkpoint is a stronger sufficient design.

\paragraph{Related literature.}
The cyber-deception literature studies the placement, disguise and strategic value of honeypots and related objects using network games, signalling games and Bayesian games \citep{PibilEtAl2012,KiekintveldEtAl2015,PawlickColbertZhu2019,HuangZhu2021}. A recent taxonomy places these objects within a wider set of cyber-deception mechanisms that includes AI-enabled deception and moving-target defence \citep{BeltranLopezEtAl2026}. Other work combines fake credentials with honeypots and derives finite-query limits for adaptive probing \citep{CrompEtAl2023}. The present paper isolates a prior information-design constraint: when a malicious decision-maker shares the benign agent's information and can implement its rule, a label cannot guide one without guiding the other.\footnote{A companion economics paper takes the feasibility constraints characterised here as given and studies optimal appearance, hardening and refresh within them \citep{Gans2026CalibratedBait}. This paper asks which cybersecurity guarantees are feasible under common information and persistent pooled observations, and which external architecture can restore asymmetry.}

Two technical traditions make the constraint concrete. \citet{Juels2014} identifies indistinguishability and secrecy as the two general requirements of a honey-object system: the real index must be held by the user or by a separate honeychecker. Honeywords implement that architecture by placing false passwords beside the real one while leaving the distinction to the auxiliary server \citep{JuelsRivest2013}. The auxiliary server supplies the information asymmetry that escapes the common-information result. \citet{AraujoHamlen2016} operationalise the same requirements in embedded honeypotting and informally observe that absolute universal indistinguishability is probably impossible. Their observation takes the form of a design judgement; the theorem below supplies the common-information result. \citet{HuangBauerReiter2024} show, in the particular case of honeywords, how passwords exposed at other sites give the attacker auxiliary information under which existing generators exhibit poor trade-offs between false-negative and false-positive rates. The results below turn this insight into a general conditional statement for agents with common information.

Decoy-document systems emphasise believability, enticingness and non-interference with legitimate work \citep{BowenEtAl2009}. Conversely, honeytoken and honeypot fingerprinting shows how metadata, protocol behaviour and implementation artefacts can reveal a decoy \citep{SrinivasaEtAl2020,SrinivasaEtAl2023}. \citet{MsaadEtAl2023} identify the majority of tested open-source honeytokens without activating them. Shared agent memory turns repeated probes of one object into a cumulative test; across objects, the same logic requires a stable design and prior knowledge or feedback that orients the observations.

Recent experiments sharpen the empirical motivation. \citet{AyzenshteynEtAl2025} introduce LLM-specific honeytokens, cloaking and traps, and report successful protection of all eleven capture-the-flag machines in their black-box evaluation. That experiment establishes conditional success within its design. \citet{KahlhoferEtAl2024} evaluate 25 cyber-deception techniques and 19 genuine security risks with 47 human participants, finding that the presence of deception reduced discovery of genuine risks by about 22 per cent. In a June 2026 preprint applying the same one-shot instrument to 21 large language models, \citet{PrinosEtAl2026} report that every model in the cohort fell for traps more often than the human benchmark. The models sometimes recognised the deception in their reasoning and selected it anyway: stated recognition did not predict action, and recognised traps were still exploited 73.4 per cent of the time. None of these studies tests a communicating population over a persistent deployment. A honeytoken can work well against current agents while remaining learnable by an adaptive coalition with shared memory.

Finally, the system-level bound connects cyber deception to intrusion detection. \citet{Denning1987} models detection from audit records, while \citet{WagnerSoto2002} construct mimicry attacks that preserve the observables used by host-based detectors. \citet{Axelsson2000} shows why even a small false-positive rate can dominate posterior alert quality when intrusions are rare. The architectural response follows the classic principles of complete mediation and least privilege \citep{SaltzerSchroeder1975}, as well as the broader information-flow concern that a confined process may still communicate through an unintended shared mechanism \citep{Lampson1973,GoguenMeseguer1982}. The Hugging Face episode supplies an unusually direct illustration of that last problem for AI agents.

The propositions are not offered as new probability theory: simulation, binary hypothesis testing and contraction of total variation under information loss are standard. Their contribution is a feasibility boundary for communicating AI agents that connects rule copying, fingerprint accumulation through shared memory, the distinction between activation reliability and attack coverage, and the protected architecture required to preserve asymmetry. Equilibrium behaviour remains shaped by payoffs, mistakes, short horizons and costly reverse engineering, any of which may make honeytokens effective inside that boundary.

Section~\ref{sec:model} states the threat model and gives the common-information analysis. Section~\ref{sec:memory} adds shared learning, Section~\ref{sec:detection} separates token activation from attack detection, and Section~\ref{sec:architecture} describes the private-control-plane construction. Proofs are in Appendix~\ref{app:proofs}.

\section{Honeytokens under common information}\label{sec:model}

\subsection{Threat model and non-goals}\label{subsec:threat}

The focal adversary is a compromised internal agent: an instance that controls, impersonates or reproduces the trusted decision layer. It sees the same pre-use information as the trusted agent, including instructions, indicators and shared memory, and it can implement the particular policy communicated to that agent. In the notation below, if $\mathcal Q_A$ is the attacker's feasible policy set and $b$ is the trusted policy, the maintained requirement is $b\in\mathcal Q_A$. Proposition~\ref{prop:copy} does not require the attacker and trusted agent to have identical feasible policy sets. Authentication does not restore asymmetry when the authenticated agent is the compromised party.

An outside attacker that cannot observe the same information or implement $b$ is not the benchmark. The results apply to such an attacker only to the extent that it acquires those capabilities. Nor do they say that a honeytoken cannot detect a first touch, impose search costs, deter a bounded attacker or support containment. The analysis isolates selective revelation before use and the durability of deception when information persists. It also does not infer attack success from object selection unless trusted and malicious use share the same downstream authority. The brokered construction in Section~\ref{sec:architecture} changes that authority and keeps the relevant information outside the agent plane.

\subsection{Object selection}\label{subsec:selection}

Consider an object with type $\theta\in\{R,H\}$, where $R$ denotes a genuine resource and $H$ a honeytoken. An object may be a credential, file, database record or endpoint. Before deciding whether to use it, an agent observes $Z\in\mathcal Z$. The observation includes everything available to the network of agents before use: the object's contents, path and metadata; system documentation; messages left by earlier agents; and any previously observed response to similar objects. Write $P_R$ and $P_H$ for the distributions of $Z$ under the two types.

A policy is a measurable function $q:\mathcal Z\rightarrow[0,1]$, where $q(z)$ is the probability that the agent uses the object after observing $z$. Randomisation is included to avoid making the results depend on deterministic policies. Let $b$ be the policy used by a trusted agent. Define its genuine-use and honeytoken-use rates as
\[
 C(b)=\E_R[b(Z)]
 \quad\text{and}\quad
 F(b)=\E_H[b(Z)].
\]
The quantity $C(b)$ measures compatibility with legitimate work. Conditional on the object being genuine, it is the probability that the trusted agent uses it; thus $1-C(b)$ is the probability that the policy mistakenly rejects a genuine resource and prevents a workflow that depends on it. The quantity $F(b)$ measures interference caused by the decoys. Conditional on the object being a honeytoken, it is the probability that the trusted agent nevertheless uses it, thereby wasting effort, consuming false information or triggering an alert even though the activity is benign. Given $\alpha,\beta\in[0,1]$, the policy is $(\alpha,\beta)$-compatible if
\[
 C(b)\geq 1-\alpha
 \quad\text{and}\quad
 F(b)\leq \beta.
\]

\Needspace{10\baselineskip}
\begin{proposition}[No selective revelation]\label{prop:copy}
Suppose a trusted agent has an $(\alpha,\beta)$-compatible policy $b$ and an attacker observes the same information and can implement $b$. Then the attacker can achieve a genuine-object use rate of at least $1-\alpha$ while activating honeytokens at a rate no greater than $\beta$. Consequently, no such scheme can guarantee an activation rate greater than $\beta$ against every attacker that can match the trusted agent's genuine-object use rate. If $\alpha=\beta=0$, the trusted filter is a perfect honeytoken-evasion rule.
\end{proposition}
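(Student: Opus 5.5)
The proof is a simulation argument: the attacker copies the trusted policy, and because both agents see the same observation, the copy inherits the trusted rates exactly. I will set $a=b$. This is admissible because the maintained requirement is $b\in\mathcal Q_A$, so $b$ lies in the attacker's feasible set. Since the attacker observes the same $Z$, whose law is $P_\theta$ under type $\theta$ for both agents, the attacker's use probabilities are $\E_\theta[a(Z)]=\E_\theta[b(Z)]$. Hence $C(a)=C(b)\ge 1-\alpha$ and $F(a)=F(b)\le\beta$. The only point to check is that ``observes the same information'' means the attacker's observation has the same conditional distribution given $\theta$ as the trusted agent's. If the attacker sees extra information $Z'$, I will let it ignore $Z'$, which still yields the same rates. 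If $b$ is randomised, the attacker draws its own independent uniform variable; this reproduces the same conditional use probability $b(z)$, which is all that $C$ and $F$ depend on.

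For the ``consequently'' clause, I will take any activation-rate guarantee $\gamma$, meaning a lower bound on the honeytoken-use rate that holds for every attacker whose genuine-use rate is at least $C(b)$. The copying attacker $a=b$ is such an attacker, and its activation rate is $F(b)\le\beta$. Therefore $\gamma\le\beta$, so no guarantee exceeding $\beta$ is possible. The same holds under the weaker threshold $1-\alpha$, since $C(b)\ge 1-\alpha$.

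For $\alpha=\beta=0$, the bounds give $C(b)=1$ and $F(b)=0$. So $b(Z)=1$ $P_R$-almost surely and $b(Z)=0$ $P_H$-almost surely. The copied filter therefore uses every genuine object and never touches a honeytoken, which makes it a perfect evasion rule. I will note in passing that this forces $P_R\perp P_H$, although the proposition does not require that.

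I expect no real obstacle here. The only step needing care is the formal reading of ``same information'' as equality of the conditional laws of the observation given $\theta$. I will state that reading explicitly at the start of the proof so that the simulation step is airtight.
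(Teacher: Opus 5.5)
Your proposal is correct and is the paper's own argument: set the attacker's policy $a=b$ (admissible since $b\in\mathcal Q_A$) and read off $\E_R[a(Z)]=\E_R[b(Z)]\geq 1-\alpha$ and $\E_H[a(Z)]=\E_H[b(Z)]\leq\beta$, which bounds the attainable activation rate and settles the $\alpha=\beta=0$ case directly. Your additional remarks are sound refinements that the paper leaves implicit: ignoring surplus information, reproducing randomisation, and noting that $\alpha=\beta=0$ forces $P_R\perp P_H$, which is consistent with Proposition~\ref{prop:tv}.
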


\noindent The proof uses simulation and places no demand on the attacker's sophistication. For every observation $z$, the attacker simply makes the same use decision, with the same probability $b(z)$, as the trusted agent. It need not identify the feature that distinguishes genuine objects from honeytokens or understand why the rule is accurate. For example, suppose the trusted agent is told to use a credential only when its file has a particular path, naming convention and signed metadata. An attacker that can observe those features and copy the rule will select genuine credentials and avoid planted credentials at exactly the same rates as the trusted agent. The classification rule itself has become an evasion rule.

The formal conclusion is nevertheless confined to object selection. Consider a genuine database credential that passes the copied filter. Suppose presenting that credential to the database has the same consequences for trusted and malicious use because, for example, both operate through the same execution identity and receive the same permissions. Then copying $b$ also reproduces the conditional distribution of downstream outcomes, including access or attack success. Formally, this inference requires trusted and malicious use to share the same outcome function of $(Z,U)$, where $U$ is the use decision. If execution authority differs, the inference does not follow. A reference monitor might allow a broker-issued, task-scoped capability to perform an authorised query while denying the attacker's otherwise identical request. In that case the attacker can still copy the selection rule and choose the same genuine object, as the proposition states, but cannot necessarily obtain the same result from using it. The reference monitor has changed the feasible action set at execution time and therefore escapes the premise needed for the stronger attack-success conclusion.

For any genuine-object use rate $s$ attainable within $\mathcal Q_A$, define
\[
 d_A(s)=\inf_{q\in\mathcal Q_A:\,\E_Rq\geq s}\E_Hq
\]
as the lowest activation rate attainable by an attacker policy that uses genuine objects at rate at least $s$. Because $b\in\mathcal Q_A$, Proposition~\ref{prop:copy} gives $d_A(1-\alpha)\leq\beta$. Other attackers may behave differently. Robust coverage fails because at least one policy attaining the trusted agent's genuine-object selection rate also attains its honeytoken-activation rate.

The second result states the other side of the problem. A designer might conceal the indicator so that neither party can classify the object. Concealment limits benign compatibility too. Define total variation by
\[
 \TV(P_R,P_H)=\sup_{A\subseteq\mathcal Z}
       \lvert P_R(A)-P_H(A)\rvert,
\]
where the supremum is over measurable sets.

\Needspace{18\baselineskip}
\begin{proposition}[Compatibility requires a fingerprint]\label{prop:tv}
For every policy $q:\mathcal Z\rightarrow[0,1]$,
\[
 \left\lvert \E_R[q(Z)]-\E_H[q(Z)]\right\rvert
 \leq \TV(P_R,P_H).
\]
Hence an $(\alpha,\beta)$-compatible policy can exist only if
\[
 \TV(P_R,P_H)\geq \max\{0,1-\alpha-\beta\}.
\]
With equal prior probabilities on the two types and zero--one loss, the minimum expected classification loss available to an agent observing $Z$ is
\[
 e^*=\frac{1-\TV(P_R,P_H)}{2}.
\]
Thus compatibility approaching one and benign activation approaching zero make the object correspondingly easy for an equally informed attacker to classify.
\end{proposition}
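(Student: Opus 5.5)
The plan is to work with densities relative to a common dominating measure, say $\mu=P_R+P_H$, writing $p_R=dP_R/d\mu$ and $p_H=dP_H/d\mu$. The first step is the standard identity
\[
 \TV(P_R,P_H)=\int (p_R-p_H)^+\,d\mu=\tfrac12\int\lvert p_R-p_H\rvert\,d\mu,
\]
with the supremum in the definition attained at $A^*=\{p_R>p_H\}$. To verify it, I will note two facts. For any measurable $A$, $P_R(A)-P_H(A)=\int_A(p_R-p_H)\,d\mu\le\int(p_R-p_H)^+d\mu$, and symmetrically for the reverse difference. Since $\int(p_R-p_H)\,d\mu=0$, the positive and negative parts have equal integrals.

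For the policy bound, fix measurable $q$ with values in $[0,1]$ and write
\[
 \E_R q-\E_H q=\int q\,(p_R-p_H)\,d\mu .
\]
Because $0\le q\le1$, this integral is at most $\int(p_R-p_H)^+d\mu$ and at least $-\int(p_R-p_H)^-d\mu$. Both bounds equal $\TV$ in absolute value, which gives the first display. The compatibility consequence is then immediate. An $(\alpha,\beta)$-compatible $b$ satisfies $\E_Rb-\E_Hb\ge 1-\alpha-\beta$, so $\TV\ge 1-\alpha-\beta$. Nonnegativity of $\TV$ supplies the $\max\{0,\cdot\}$.

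For the Bayes error, I will allow any randomised classifier, interpreting $q(z)$ as the probability of declaring $R$. With equal priors the expected zero--one loss is
\[
 \tfrac12\bigl(1-\E_Rq\bigr)+\tfrac12\E_Hq=\tfrac12\bigl(1-(\E_Rq-\E_Hq)\bigr).
\]
By the first part this is at least $(1-\TV)/2$. The bound is attained by the deterministic rule $q=\mathbf 1_{A^*}$, the likelihood-ratio test, since $P_R(A^*)-P_H(A^*)=\TV$. Hence $e^*=(1-\TV)/2$. The closing sentence of the proposition then follows by chaining the two parts. If $C(b)\to1$ and $F(b)\to0$, then $\TV\to1$ and $e^*\to0$, and this $e^*$ is available to any agent that observes $Z$, including the attacker of Proposition~\ref{prop:copy}.

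I expect the only delicate point to be the measure-theoretic one: the supremum over measurable sets must be attained or approached, and the density representation must be valid without assuming that $P_R$ and $P_H$ have densities with respect to some fixed base measure. Taking $\mu=P_R+P_H$ handles this, because both measures are absolutely continuous with respect to it and Radon--Nikodym applies. I will also state explicitly that ``policy'' means a measurable $[0,1]$-valued function, so that the integrals are well defined.
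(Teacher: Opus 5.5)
Your proof is correct, but it reaches the main inequality by a different route from the paper.

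The paper does not use densities for that step. It writes $q(z)=\int_0^1\mathbf 1\{q(z)>t\}\,dt$ (the layer-cake representation) and applies Tonelli to get $\E_Rq-\E_Hq=\int_0^1[P_R(q>t)-P_H(q>t)]\,dt$. It then bounds each integrand directly by the set-supremum definition of $\TV$. In effect, a randomised policy becomes an average of the indicator policies of its super-level sets, so the bound for general $q$ is inherited from the bound for sets. Densities with respect to $\mu=P_R+P_H$ enter only in the Bayes-error step. There the paper invokes the positive-set characterisation of $\TV$, without proving it, to show that the likelihood-ratio region attains the bound.

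You instead prove that characterisation first: $\TV=\int(p_R-p_H)^+\,d\mu=\tfrac12\int\lvert p_R-p_H\rvert\,d\mu$, with maximiser $\{p_R>p_H\}$. You then obtain both the policy bound (pairing $0\le q\le 1$ with the signed density) and the attainment from that one identity.

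Each route has its advantage:
\begin{itemize}
\item The paper's route is lighter for the inequality itself, needing only the definition and Tonelli. It also shows transparently why randomisation cannot beat the best deterministic set.
\item Yours is more self-contained, because it establishes the fact the paper only cites. It also delivers the optimal test and the $L^1$ formula together.
\end{itemize}

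The remaining steps coincide: the compatibility corollary and the equal-prior error identity. Whether the likelihood-ratio region is defined with $>$ or $\geq$ is immaterial.

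One caution on your last sentence: $e^*$ is an information-theoretic frontier. An attacker actually attains it only if the likelihood-ratio rule lies in its feasible policy set, as the paper notes after the proposition.
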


\noindent If $P_R=P_H$, every policy uses genuine objects and honeytokens at the same rate. Perfect camouflage, therefore, prevents trusted avoidance whenever the agent must continue to use lookalike genuine objects. If $P_R$ and $P_H$ are mutually singular, an agent able to implement the likelihood-ratio rule can separate the types perfectly. Intermediate designs remain on the trade-off between these limits. Proposition~\ref{prop:tv} also gives the complementary lower bound $d_A(s)\geq\max\{0,s-\TV(P_R,P_H)\}$ whenever $s$ is attainable within $\mathcal Q_A$. Camouflage can, therefore, force any attacker policy attaining genuine-object use rate $s$ to touch honeytokens; which points on the statistical frontier the attacker can attain depends on its feasible policy set.

\begin{remark}[Unequal priors and losses]\label{rem:priors}
Equal priors isolate distinguishability, whereas a deployment may have different base rates. Let $\pi=\Pr(\theta=R)\in(0,1)$, let $\mu=P_R+P_H$, and write $f_R=dP_R/d\mu$ and $f_H=dP_H/d\mu$. If a decision region $S$ is classified as genuine, its expected zero--one loss is
\[
 e_\pi(S)=\int_{S^c}\pi f_R\,d\mu
       +\int_S(1-\pi)f_H\,d\mu.
\]
Choosing the smaller integrand at each observation gives
\[
 e_\pi^*=\int \min\{\pi f_R,(1-\pi)f_H\}\,d\mu,
\]
with an optimal rule declaring the object genuine when $\pi f_R\geq(1-\pi)f_H$. If honeytokens are rare, declaring most objects genuine can produce a low aggregate error alongside poor identification of honeytokens. Asymmetric error costs shift the same likelihood-ratio threshold. The equal-prior formula in Proposition~\ref{prop:tv}, therefore, characterises the statistical experiment independently of operational prevalence or loss.
\end{remark}

\begin{remark}[A private indicator can leak through behaviour]\label{rem:leakage}
Suppose the trusted agent receives a private indicator $J$. If it can communicate $J$, that indicator simply becomes part of $Z$. Even if $J$ is not sent directly, an observable trusted action can reveal it. Let $Y\in\{0,1\}$ record whether the trusted agent uses an object. If $\Pr(Y=1\mid R)\geq1-\alpha$ and $\Pr(Y=1\mid H)\leq\beta$, then the two conditional distributions of $Y$ have total-variation distance at least $1-\alpha-\beta$. Under equal priors, an observer that classifies the object as genuine when $Y=1$ has error at most $(\alpha+\beta)/2$. A perfectly discriminating trusted action is, therefore, a perfectly revealing output. If that action is observed only after a one-off token has fired, the first alert remains valuable; the leakage concerns later use of the same object or deployment rule.
\end{remark}

Propositions~\ref{prop:copy} and \ref{prop:tv} produce a trilemma under common information and an implementable trusted policy. A designer cannot jointly guarantee perfect compatibility for trusted agents, perfect deception of compromised agents and the absence of private information or an externally enforced capability difference. At least one must give. The trilemma concerns stable pre-use information. A defender may still obtain value from uncertainty, delay or the first activation, none of which requires permanent attacker confusion.

The copying argument also applies under computational limits whenever $b\in\mathcal Q_A$. In the stronger special case where both roles share the same model, tools and policy class $\mathcal Q$, the statistical bound can be replaced by $\sup_{q\in\mathcal Q}|\E_Rq-\E_Hq|$. An efficient classifier made available to a trusted internal agent is, therefore, available to an adversarial instance with the same capabilities.

\section{Shared memory and fingerprinting}\label{sec:memory}

Shared memory pools signals about a fixed candidate and, under stronger conditions, supports a fingerprint that transfers across objects.

The sampling unit determines whether pooled evidence identifies one fixed object or supports a fingerprint that generalises across different objects. Fix one candidate object and hold its type $\theta\in\{R,H\}$ constant. Let $Z_i$ be the $i$th noisy signal generated by viewing or probing that candidate, such as a timing measurement, protocol response, renderer behaviour or retrieved piece of metadata. The pooled record $Z^n=(Z_1,\ldots,Z_n)$ contains the signals available after $n$ agents or probes have written their findings to shared memory. Denote its distributions by $P_R^n$ and $P_H^n$, and set
\[
 \Delta_n=\TV(P_R^n,P_H^n).
\]
The distributions are projectively consistent: discarding the last observation from $Z^{n+1}$ produces the distribution of $Z^n$.

\begin{proposition}[Information pooling]\label{prop:pooling}
The sequence $\{\Delta_n\}$ is weakly increasing. If, conditional on type, the observations are independent and identically distributed and their one-observation distributions differ, then
\[
 \lim_{n\rightarrow\infty}\Delta_n=1.
\]
Equivalently, the minimum equal-prior expected zero--one loss $e_n^*\triangleq(1-\Delta_n)/2$ converges to zero. More precisely, for any measurable set $A$ with
\[
 \gamma=\lvert P_R(A)-P_H(A)\rvert>0,
\]
the empirical-frequency test gives
\[
 e_n^*\leq \exp\left(-\frac{n\gamma^2}{2}\right)
 \quad\text{and}\quad
 \Delta_n\geq 1-2\exp\left(-\frac{n\gamma^2}{2}\right).
\]
\end{proposition}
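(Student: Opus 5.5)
The argument has three parts: monotonicity, an explicit test that yields the finite-sample bound, and the limit, which follows from that bound.

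\emph{Monotonicity.} I would use projective consistency. Let $\pi:\mathcal Z^{n+1}\to\mathcal Z^n$ drop the last coordinate. Then $P_\theta^n=P_\theta^{n+1}\circ\pi^{-1}$ for each type. For any measurable $B\subseteq\mathcal Z^n$,
\[
 \lvert P_R^n(B)-P_H^n(B)\rvert=\lvert P_R^{n+1}(\pi^{-1}B)-P_H^{n+1}(\pi^{-1}B)\rvert\leq\Delta_{n+1}.
\]
Taking the supremum over $B$ gives $\Delta_n\leq\Delta_{n+1}$. This step needs only consistency, not independence.

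\emph{Finite-sample bound.} Next I would fix a measurable $A$ with $\gamma=\lvert P_R(A)-P_H(A)\rvert>0$; without loss of generality $p_R=P_R(A)>p_H=P_H(A)$. Let $\hat p_n=n^{-1}\sum_i\mathbf 1\{Z_i\in A\}$. Conditional on type, $\hat p_n$ is an average of i.i.d.\ Bernoulli variables. The test declares $R$ when $\hat p_n\geq m$, with midpoint $m=(p_R+p_H)/2$. Each error requires a deviation of at least $\gamma/2$ from the conditional mean. Hoeffding's inequality bounds each conditional error by $\exp(-2n(\gamma/2)^2)=\exp(-n\gamma^2/2)$. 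The equal-prior loss of this test is the average of the two errors, so it is at most $\exp(-n\gamma^2/2)$. Since $e_n^*$ is the minimum over all tests, $e_n^*\leq\exp(-n\gamma^2/2)$.

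To get the bound on $\Delta_n$, I would first convert $e_n^*=(1-\Delta_n)/2$ (Proposition~\ref{prop:tv} applied to $Z^n$), which gives only $\Delta_n\geq1-2\exp(-n\gamma^2/2)$. The stated bound is more directly obtained from the test region $B=\{\hat p_n\geq m\}$:
\[
 \Delta_n\geq P_R^n(B)-P_H^n(B)\geq\bigl(1-e^{-n\gamma^2/2}\bigr)-e^{-n\gamma^2/2}.
\]
This is exactly the stated inequality.

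\emph{Limit.} If the one-observation laws differ, then $\TV(P_R,P_H)>0$, so some measurable $A$ has $\gamma>0$. The bound above then forces $\Delta_n\to1$ and $e_n^*\to0$, and monotonicity confirms the convergence is monotone.

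The main point needing care is the Hoeffding step: the one-sided deviation bound for each error and the choice of the midpoint threshold. The rest follows from Proposition~\ref{prop:tv} and the definitions.
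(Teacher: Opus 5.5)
Your proposal is correct and follows the paper's proof essentially step for step. Monotonicity comes from the projection and contraction of total variation, which you prove directly via preimages rather than citing. The bound and the limit come from a midpoint-threshold empirical-frequency test on $A$, with Hoeffding bounding each conditional error by $\exp(-n\gamma^2/2)$. One small correction: your phrase ``which gives only'' is misleading. Substituting $e_n^*\leq\exp(-n\gamma^2/2)$ into $\Delta_n=1-2e_n^*$ already yields exactly the stated bound, which is the paper's route, and your direct computation of $P_R^n(B)-P_H^n(B)$ is an equivalent way to reach it.
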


\noindent The proposition overturns the natural tendency to treat short-lived agent runs as independent opportunities for a static decoy to catch an attacker. Once their findings are pooled, the relevant unit is the learning coalition rather than the individual run: clues too weak to guide any one agent can accumulate into a reliable classification, so repeatedly exposing the same candidate need not preserve the defender's initial uncertainty. The Hugging Face intrusion illustrates this shift: although each sandbox began with a fresh identity and no persistent local state, shared dead drops allowed later runs to recover and extend earlier findings, so the sequence operated as one learning adversary.

The monotonicity part requires no independence. A classifier with the pooled record can always ignore its newest entry, so additional information cannot lower the best attainable performance. This bound describes the information frontier. Realised performance may fall below it because an agent retrieves memory poorly, faces a context limit, accepts poisoned reports or lacks the computation needed to use the record.

The conditionally i.i.d. assumption supplies a stationary benchmark for repeated probes of one fixed candidate. If any measurable response occurs with a fixed different probability under the two types, its empirical frequency identifies the type, and the finite-sample bound shows how the detectable gap $\gamma$ governs the number of probes. The bound assumes the two conditional laws are known when defining the Bayes classifier. If they are unknown, attaining the frontier requires labelled training observations, an identifiable activation or response signal, or another feedback process under which the laws can be learned consistently.

Operationally, the proposition applies directly when each $Z_i$ is a non-triggering fingerprint signal, such as metadata or a passive timing observation. The coalition can then collect evidence without the measurement itself activating the token.

If obtaining $Z_i$ may activate the token and prompt containment, the proposition characterises the information available after $n$ observations but does not imply that the coalition can obtain them. Let $\tau$ denote the number of observations available before effective containment. When $\tau$ is finite, the deployment produces a stopped experiment rather than the unrestricted i.i.d. sequence in the proposition. The operational question is whether classification becomes useful before containment, and the asymptotic result does not determine which occurs first.

The proposition applies to repeated observations of one fixed candidate. In a cross-object deployment, candidate $j$ has its own type $\theta_j$, feature vector $Z_j$ and any available feedback $L_j$. Generalising a fingerprint across candidates requires a stable common generation rule and information that identifies or otherwise orients the classes (for example, a known generator or a shared record $\{(Z_j,L_j)\}_{j\leq n}$ containing informative feedback). An unlabelled mixture may reveal that two components differ without revealing which component contains honeytokens. Common names, timestamps or certificates become transferable fingerprints only under those additional conditions.

A token registry outside every agent's view remains hidden when the pooled observations never interact with it. Communication aggregates available evidence; it does not reveal an independent secret by itself.

The 2026 incident directly establishes the communication premise. Cross-object labels require separate evidence. The Artifactory board allowed short-lived evaluation runs to pool discoveries, while the later intrusion carried state over public dead drops and recreated its tooling after local state disappeared. Repeated responses from one candidate fit Proposition~\ref{prop:pooling} directly. A naming pattern, implausible history, uniform timestamp or shared certificate across different tokens fits the cross-object interpretation only when the generator is known or prior activations and other feedback orient examples from the decoy class. Existing studies document fingerprintable artefacts in honeytokens and deployed honeypots \citep{SrinivasaEtAl2020,MsaadEtAl2023,SrinivasaEtAl2023}.

Hidden randomisation, heterogeneous implementations and per-agent decoys can reduce the transferability of prior observations when they do not introduce new stable markers. Rotation can likewise reduce the value of stale reconnaissance when the design changes faster than agents learn its invariant features \citep{ZhuangDeLoachOu2014}; visible or mechanically regular rotation may create a fingerprint. These responses may lower or shorten the information advantage. A universal guarantee against an adaptive attacker requires an additional information or authority asymmetry.

\section{The attack-detection limit of token activation}\label{sec:detection}

A honeytoken can be instrumented so that presenting its credential or requesting its URL generates an unambiguous event. This gives perfect detection of token use. Perfect detection of malicious activity with no false alarms additionally requires every successful malicious path to use the token and every benign path to avoid it. We now leave the object-classification experiment and consider a separate experiment over complete system histories.

Let $X$ denote the full history visible to a monitor: object selections, API calls, identities, timing and any other telemetry on which it may condition. Let $P_B$ be the distribution of $X$ under benign operation. An attacker chooses a strategy $\sigma$ from a nonempty set $\Sigma_A$, inducing a distribution $P_{A,\sigma}$ over the same histories. A detector is a measurable function $\delta(X)\in[0,1]$. Its strategy-specific true-positive rate and its false-positive rate are
\[
 \operatorname{TPR}_{\sigma}(\delta)=\E_{A,\sigma}[\delta(X)]
 \quad\text{and}\quad
 \operatorname{FPR}(\delta)=\E_B[\delta(X)].
\]

\begin{proposition}[The detection boundary]\label{prop:detection}
For every detector $\delta$ and attack strategy $\sigma$,
\[
 \operatorname{TPR}_{\sigma}(\delta)-\operatorname{FPR}(\delta)
 \leq \TV(P_{A,\sigma},P_B).
\]
Consequently, the robust separation value
\[
 V\triangleq\sup_{\delta}\inf_{\sigma\in\Sigma_A}
 \left\{\operatorname{TPR}_{\sigma}(\delta)-\operatorname{FPR}(\delta)\right\}
\]
satisfies
\[
 V
 \leq
 \inf_{\sigma\in\Sigma_A}\TV(P_{A,\sigma},P_B).
\]
For a fixed $\sigma$, perfect detection with no false alarms is possible if and only if $P_{A,\sigma}$ and $P_B$ are mutually singular. If some strategy $\sigma_0$ induces $P_{A,\sigma_0}=P_B$, then the robust separation value is zero and every detector has $\operatorname{TPR}_{\sigma_0}=\operatorname{FPR}$.
\end{proposition}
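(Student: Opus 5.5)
The plan is to reuse the argument behind Proposition~\ref{prop:tv} on the space of full histories, with $P_{A,\sigma}$ and $P_B$ playing the roles of $P_R$ and $P_H$.

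\textbf{Step 1: the pointwise bound.} Fix $\sigma$, let $\mu=P_{A,\sigma}+P_B$, and write $f_A$ and $f_B$ for the two densities with respect to $\mu$. For any detector $\delta$ with values in $[0,1]$,
\[
\operatorname{TPR}_\sigma(\delta)-\operatorname{FPR}(\delta)=\int \delta\,(f_A-f_B)\,d\mu\leq\int_{\{f_A>f_B\}}(f_A-f_B)\,d\mu=\TV(P_{A,\sigma},P_B).
\]
The last equality is the standard identification of the supremum in the definition of total variation with the set $\{f_A>f_B\}$. One can also simply invoke the first display of Proposition~\ref{prop:tv}, taking $q=\delta$.

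\textbf{Step 2: the robust value.} For each $\delta$ and each $\sigma'$, the infimum over $\sigma$ is at most the value at $\sigma'$, so
\[
\inf_\sigma\{\operatorname{TPR}_\sigma(\delta)-\operatorname{FPR}(\delta)\}\leq\TV(P_{A,\sigma'},P_B).
\]
Taking the supremum over $\delta$ and then the infimum over $\sigma'$ gives $V\leq\inf_{\sigma}\TV(P_{A,\sigma},P_B)$. No minimax theorem is needed, because only this direction is claimed.

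\textbf{Step 3: the characterisation.} I read ``perfect detection with no false alarms'' as $\operatorname{TPR}_\sigma=1$ and $\operatorname{FPR}=0$.
\begin{itemize}
\item \emph{If the distributions are mutually singular:} take a measurable $S$ with $P_{A,\sigma}(S)=1$ and $P_B(S)=0$. The detector $\delta=\mathbf 1_S$ then achieves perfect detection.
\item \emph{Conversely:} suppose $\operatorname{TPR}_\sigma(\delta)=1$ and $\operatorname{FPR}(\delta)=0$. Since $\delta\le1$, we get $\delta=1$ $P_{A,\sigma}$-a.s., and since $\delta\ge0$, we get $\delta=0$ $P_B$-a.s. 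The set $S=\{\delta=1\}$ is measurable and separates the two measures, so they are mutually singular.
\end{itemize}
Equivalently, perfect detection holds exactly when $\TV=1$, and Step~1 already shows that $\TV=1$ is necessary.

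\textbf{Step 4: the mimicking strategy.} Suppose $P_{A,\sigma_0}=P_B$. Then for every $\delta$, $\operatorname{TPR}_{\sigma_0}(\delta)=\E_B\delta=\operatorname{FPR}(\delta)$. By Step~2 this gives $V\le0$. For the reverse inequality, the constant detector $\delta\equiv0$ gives $0$ against every $\sigma$, so $V\ge0$. Hence $V=0$.

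The only delicate point is the measure-theoretic bookkeeping in the converse of Step~3: the detector may be randomised, so one must extract a deterministic separating set from it. The almost-sure argument above handles this. Every other step is a direct application of Proposition~\ref{prop:tv}'s bound.
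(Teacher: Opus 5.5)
Your proposal is correct and follows the paper's proof essentially step for step: the pointwise bound via Proposition~\ref{prop:tv}'s first inequality, the pointwise-in-$\sigma$ passage to the robust bound, the indicator of a separating set for sufficiency, and the constant detector $\delta\equiv0$ for $V\geq0$. The only variation is in the necessity direction of the characterisation. There you extract the separating set $\{\delta=1\}$ directly from the almost-sure constraints, whereas the paper cites the standard equivalence between $\TV=1$ and mutual singularity; your version makes that step self-contained.
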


\noindent The bound limits inference from behaviour while leaving cryptographic verification intact. A signed log can prove that a credential was presented. Hidden intent remains uncertifiable on a history compatible with both benign and attack regimes. Overlap that prevents mutual singularity rules out perfect separation; equality of the full distributions produces the stronger conclusion that true- and false-positive rates coincide. Likewise, a trusted-agent identifier verifies which identity acted while leaving the identity's compromise status unresolved.

There is a useful path formulation. Let $\mathcal E$ be a set of context-labelled events, let $\mathcal D\subseteq\mathcal E$ contain the events the defender can instrument, and let $\mathcal B$ and $\mathcal A$ be the benign and successful attack paths. For a static trap set $K\subseteq\mathcal D$, the detector $\delta_K(x)=\mathbf 1\{x\cap K\neq\varnothing\}$ has zero benign interference if it meets no path in $\mathcal B$, and complete coverage if it meets every path in $\mathcal A$. The detector $\delta_K$ specialises the detector in Proposition~\ref{prop:detection} to membership in a trap set. Requiring zero interference and complete coverage for every path gives the worst-case, set-valued analogue of that proposition's probabilistic separation problem. Write $U_B=\cup_{b\in\mathcal B}b$ for the events used on some benign path.

\begin{corollary}[The hitting-set condition]\label{cor:hitting}
A deployable zero-interference trap set with complete coverage exists if and only if
\[
 a\cap(\mathcal D\setminus U_B)\neq\varnothing
 \qquad\text{for every }a\in\mathcal A.
\]
\end{corollary}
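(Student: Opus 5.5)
The plan is a direct two-direction argument that uses the canonical trap set $K^*=\mathcal D\setminus U_B$: every deployable event that no benign path uses. The key observation is that the zero-interference constraint is equivalent to $K\subseteq\mathcal D\setminus U_B$. Once that is established, the maximal admissible trap set is $K^*$, and coverage only becomes easier as $K$ grows.

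First I would prove the characterisation of zero interference. Take $K\subseteq\mathcal D$. By definition, $\delta_K$ meets no benign path exactly when $b\cap K=\varnothing$ for every $b\in\mathcal B$. This holds if and only if $K\cap\bigcup_{b\in\mathcal B}b=\varnothing$, that is, $K\cap U_B=\varnothing$. Combined with deployability ($K\subseteq\mathcal D$), this is $K\subseteq\mathcal D\setminus U_B$. The step is pure set algebra, since a union is disjoint from $K$ exactly when each member is.

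For sufficiency, assume $a\cap(\mathcal D\setminus U_B)\neq\varnothing$ for every $a\in\mathcal A$, and set $K=K^*$. By the previous step, $K^*$ is deployable and has zero interference. The hypothesis says directly that $a\cap K^*\neq\varnothing$ for every successful attack path, so $\delta_{K^*}(a)=1$ on all of $\mathcal A$, which is complete coverage.

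For necessity, suppose some deployable $K$ has zero interference and complete coverage. By the first step, $K\subseteq\mathcal D\setminus U_B$. Complete coverage gives, for each $a\in\mathcal A$, an event $e\in a\cap K\subseteq a\cap(\mathcal D\setminus U_B)$, so that intersection is nonempty. No step is a genuine obstacle. The only care needed is to state the interpretation precisely: paths are treated as sets of context-labelled events, so ``meets'' means nonempty intersection, and the empty-$\mathcal A$ case holds vacuously on both sides.
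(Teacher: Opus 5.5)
Your proof is correct and matches the paper's argument essentially step for step: necessity from $K\subseteq\mathcal D\setminus U_B$ plus complete coverage, and sufficiency by taking the maximal set $K=\mathcal D\setminus U_B$. The only difference is that you state the zero-interference condition as an explicit equivalence, whereas the paper invokes each direction of it where needed.
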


\noindent The cleanest honeytoken, therefore, attaches to an action with zero legitimate use. An unused administrator credential, a database row no application should request or a URL absent from every authorised workflow can supply a high-confidence signal. A decoy document embedded in an ordinary retrieval corpus cannot inherit that confidence merely by being called a honeytoken. With enough benign retrievals, even a small touch probability produces many investigations and dilutes the posterior meaning of an alert, the base-rate problem emphasised by \citet{Axelsson2000}.

Complete coverage is possible in a restricted threat model. If every successful attack contains at least one deployable event that no benign path uses, the defender can instrument a hitting set of such events, for example $\mathcal D\setminus U_B$. A stronger special case forces every path through one independently enforced gate. Complete mediation then supplies the security guarantee. The decoy may make the violation more attractive or informative, while independent enforcement makes the gate unavoidable.

The path result holds the families $\mathcal A$ and $\mathcal B$ fixed. If placing $K$ changes attacker routes or legitimate workflows, those induced path families must be recomputed. Corollary~\ref{cor:hitting} therefore characterises static support. An equilibrium trap-placement analysis must also account for induced route and workflow changes.

\section{The private-control-plane construction}\label{sec:architecture}

The common-information result points to a separation. This construction places token identity and execution authority outside the agent plane. Consider a credential broker outside the agent's readable environment. The broker stores the registry that distinguishes real credentials from honeytokens and issues a short-lived, task-scoped capability after checking provenance and authority. Protected services validate that capability at the action boundary. Trusted agents follow a simple rule: never use a credential discovered in source code, a document, a model context or shared memory; request the required capability from the broker. A planted credential found in any of those places is unnecessary to legitimate work. Figure~\ref{fig:broker} traces the resulting authority and alert paths.

This architecture separates three facts that are often conflated:
\begin{enumerate}
    \item The defender can recognise the token. The mapping remains exclusively in the broker or honeychecker, beyond agents' view.
    \item The trusted agent can avoid the token. It rejects all unbrokered credentials by provenance, without classifying individual strings as real or false.
    \item The attacker may still select the token. An agent that departs from the provenance rule and presents a discovered credential generates the alert.
\end{enumerate}
\FloatBarrier

\begin{figure}[!htbp]
\centering
\begin{tikzpicture}[
    box/.style={draw, rounded corners, align=center, text width=3.45cm, minimum height=1cm, inner sep=4pt, font=\footnotesize},
    arrowlabel/.style={fill=white, inner sep=1pt, align=center, font=\scriptsize},
    >=Latex
]
\node[box, text width=3cm] (inputs) {Untrusted objects\\including planted credentials};
\node[box, text width=3.25cm, right=24mm of inputs] (agent) {Agent plane\\trusted or compromised};
\node[box, text width=3cm, right=24mm of agent] (service) {Protected service\\validates capabilities};
\node[box, text width=3cm, below=12mm of inputs] (registry) {Private token registry};
\node[box, text width=3.25cm, below=12mm of agent] (broker) {Broker/reference monitor};
\node[box, text width=3cm, below=12mm of service] (alert) {Presentation telemetry\\and alert};
\node[draw, dashed, rounded corners, fit=(registry)(broker)(alert), inner sep=2.5mm,
      label={[font=\scriptsize]below:Private control plane / trusted computing base}] {};

\draw[-{Latex}] (inputs) -- (agent);
\draw[-{Latex}] (agent) -- node[arrowlabel, above, font=\tiny] {present capability\\or discovered credential} (service);
\draw[-{Latex}] ([xshift=-3mm]agent.south) -- node[arrowlabel, left, pos=0.35, font=\tiny] {request} ([xshift=-3mm]broker.north);
\draw[-{Latex}] ([xshift=3mm]broker.north) -- node[arrowlabel, right, pos=0.35, font=\tiny] {capability} ([xshift=3mm]agent.south);
\draw[-{Latex}] (registry) -- (broker);
\draw[-{Latex}] (broker) -- (alert);
\draw[-{Latex}] (service) -- (alert);
\end{tikzpicture}
\caption{Brokered provenance. The agent presents broker-issued capabilities or discovered credentials to the protected service. The registry and monitor remain in the control plane; a planted unbrokered credential triggers an alert.}
\label{fig:broker}
\end{figure}
\FloatBarrier

\paragraph{Brokered-provenance design guarantee.}
Suppose that (i) the broker, token registry and monitor are outside the agent plane and cannot be read or altered by agents; (ii) every capability used in compliant operation is broker-issued; (iii) compliant agents reject all unbrokered credentials; (iv) each planted credential is unbrokered and has no legitimate use; and (v) the monitor observes every presentation of a planted credential and alerts reliably. Then compliant operation activates no planted credential, and every path that presents one is detected with probability one.

The broker, token registry and monitor are therefore part of the trusted computing base. Compromise of any one of them removes the guarantee. The design resembles the honeychecker in \citet{JuelsRivest2013} and the more general reference-monitor construction in \citet{Juels2014}. A stolen password file contains several plausible values, and only a separate service knows which one is genuine. No public label encodes the distinction for the attacker to inspect. For agent systems, the protected component must also be non-exportable: if the broker returns a durable secret into a compromised agent's context, the asymmetry lasts only until that secret is copied to shared memory.

The construction does not prevent malicious use of a legitimately issued capability, and it does not detect an attacker that avoids planted credentials. An attacker can also steal a broker-issued capability after issuance. The honeytoken detects one specified deviation: attempted use of an object that policy-compliant agents never use.
\begin{enumerate}
    \item \term{Keep identifiers out of the agent plane.} Token registries, callbacks and correlation keys belong in a control plane that agents cannot read. Obscure metadata visible to a trusted agent is visible to a compromised one too.
    \item \term{Route by provenance.} Legitimate agents should obtain capabilities from a broker and reject whole untrusted sources. No particular bait credential need be disclosed to them.
    \item \term{Use task-scoped, short-lived and preferably non-exportable authority.} Identity is insufficient when the authenticated agent is compromised. Authority should be checked at the action boundary and limited to the current task.
    \item \term{Personalise and refresh deceptive objects carefully.} Per-agent or per-session tokens can limit how far one observation generalises when personalisation and refreshment do not create their own stable markers.
    \item \term{Make alerts quiet and responses fast when responses are observable.} A conspicuous block or mechanically timed rotation can teach the coalition. Silent telemetry can preserve uncertainty, while rapid containment matters when it can outrun information sharing.
    \item \term{Measure contact, bypass, activation and compromise separately.} Silence may mean no attack, successful deterrence, learned avoidance or an attack along another path. A token count alone cannot distinguish them.
\end{enumerate}
Honeytokens are sensors whose precision can be exceptionally high when legitimate use is structurally zero. In strategic environments studied in the deception literature, they may impose search costs or deter an attacker who remains uncertain \citep{PibilEtAl2012,KiekintveldEtAl2015,CrompEtAl2023}. Those strategic effects lie outside this feasibility model, whose guarantee is conditional on activation. A complete partition between benign and malicious behaviour still requires a security boundary.

\section{Conclusion}\label{sec:conclusion}

The Hugging Face incident revealed two forms of persistence. Before the external intrusion, agents used shared Artifactory state to carry exploit knowledge across otherwise separate evaluation runs. During the intrusion, public services and attacker-controlled datasets carried commands and results across fresh sandboxes. The incident therefore changes the relevant unit of analysis: even when honeytokens continue to work, separate agents sharing a memory substrate form one learning adversary.

When a compromised agent shares the trusted agent's information and can implement its policy, a trusted-agent indicator is an attacker indicator. If the defender removes every visible difference, trusted agents lose the same ability to distinguish genuine from false objects. Repeated probes make a persistent difference easier to learn for a fixed candidate when the coalition remains active long enough to collect them; transfer across candidates additionally requires a stable deployment rule and prior knowledge or feedback that orients the classes. At the system level, a perfectly instrumented token certifies its own activation and leaves every malicious path that avoids the token uncovered.

With common information and a copyable trusted policy, durable asymmetry requires protected information or authority outside the agent plane; the broker supplies one construction. Universal detection further requires every successful attack path to cross some deployable monitored event outside benign support.

\clearpage
\appendix
\section{Proofs}\label{app:proofs}

The first two proofs concern object selection under common information, and the third concerns repeated probes of a fixed candidate. Proposition~\ref{prop:detection} and Corollary~\ref{cor:hitting} then turn to post-interaction detection. The arguments use policy copying, the layer-cake representation, contraction of total variation, Hoeffding's inequality, and measurable or set-theoretic separation.

\begin{proof}[Proof of Proposition~\ref{prop:copy}]
Let $a$ denote the attacker's policy. By assumption, the trusted policy $b$ is available to the attacker. Set $a=b$. By $(\alpha,\beta)$-compatibility, the copied policy's genuine-object use rate satisfies
\[
 \E_R[a(Z)]=\E_R[b(Z)]\geq 1-\alpha,
\]
while its honeytoken-activation rate satisfies
\[
 \E_H[a(Z)]=\E_H[b(Z)]\leq\beta.
\]
Thus the infimum activation rate over attack policies attaining a genuine-object use rate of at least $1-\alpha$ is no greater than $\beta$. If $\alpha=\beta=0$, the same copied policy uses genuine objects with probability one and honeytokens with probability zero, proving the final statement.
\end{proof}

\begin{proof}[Proof of Proposition~\ref{prop:tv}]
Fix a measurable policy $q:\mathcal Z\rightarrow[0,1]$. For each $z\in\mathcal Z$, the layer-cake representation gives
\[
 q(z)=\int_0^1 \mathbf 1\{q(z)>t\}\,dt.
\]
Taking expectations under each type and applying Tonelli's theorem yields
\[
 \E_Rq-\E_Hq
 =\int_0^1\bigl[P_R(q>t)-P_H(q>t)\bigr]\,dt.
\]
The triangle inequality and the definition of total variation therefore give
\begin{align*}
 \left\lvert\E_Rq-\E_Hq\right\rvert
 &\leq \int_0^1
 \left\lvert P_R(q>t)-P_H(q>t)\right\rvert\,dt\\
 &\leq \TV(P_R,P_H).
\end{align*}
For an $(\alpha,\beta)$-compatible policy $b$, the two compatibility inequalities imply
\[
 \E_Rb-\E_Hb\geq 1-\alpha-\beta,
\]
and the inequality just proved bounds the left-hand side above by $\TV(P_R,P_H)$. Combining this inequality with the non-negativity of total variation gives the stated maximum with zero.

For the classification statement, view $q$ as a randomised rule that classifies the object as genuine with probability $q(Z)$. Using the bound already established, its equal-prior error satisfies
\[
 e(q)=\frac{1}{2}\left[(1-\E_Rq)+\E_Hq\right]
 \geq \frac{1-\TV(P_R,P_H)}{2}.
\]
To attain this lower bound, let $\mu=P_R+P_H$, which dominates both probability measures, and choose
\[
 q^*=\mathbf 1\{dP_R/d\mu\geq dP_H/d\mu\}.
\]
By the positive-set characterisation of total variation, this likelihood-ratio region attains $\TV(P_R,P_H)$. Substitution into the error expression therefore gives equality and proves the stated Bayes-error formula.
\end{proof}

\begin{proof}[Proof of Proposition~\ref{prop:pooling}]
\emph{Monotonicity.}
Let $\pi_n$ project $Z^{n+1}$ onto its first $n$ coordinates. Projective consistency implies $P_\theta^n=P_\theta^{n+1}\circ\pi_n^{-1}$ for each $\theta$. Total variation contracts under measurable maps, so
\[
 \Delta_n
 =\TV(P_R^{n+1}\circ\pi_n^{-1},P_H^{n+1}\circ\pi_n^{-1})
 \leq \TV(P_R^{n+1},P_H^{n+1})
 =\Delta_{n+1}.
\]

\emph{Convergence.}
Because the one-observation distributions differ, there is a measurable set $A$ for which $p_R=P_R(A)\neq P_H(A)=p_H$. Interchanging the type labels if necessary, take $p_R>p_H$, and let $c=(p_R+p_H)/2$. Define
\[
 B_n=\left\{\frac{1}{n}\sum_{i=1}^n\mathbf 1\{Z_i\in A\}>c\right\}.
\]
Write $\gamma=p_R-p_H>0$. Under the conditionally i.i.d. assumption, Hoeffding's inequality gives
\[
 P_R^n(B_n^c)\leq \exp\left(-\frac{n\gamma^2}{2}\right)
 \quad\text{and}\quad
 P_H^n(B_n)\leq \exp\left(-\frac{n\gamma^2}{2}\right).
\]
The equal-prior error of this empirical-frequency test is therefore at most $\exp(-n\gamma^2/2)$. The Bayes error $e_n^*$ is no larger, and Proposition~\ref{prop:tv}'s testing identity gives
\[
 \Delta_n=1-2e_n^*
 \geq 1-2\exp\left(-\frac{n\gamma^2}{2}\right).
\]
The right-hand side converges to one, while $\Delta_n\leq1$, so $\Delta_n\rightarrow1$ and $e_n^*\rightarrow0$.
\end{proof}

\begin{proof}[Proof of Proposition~\ref{prop:detection}]
Fix a detector $\delta$ and an attack strategy $\sigma$. Applying Proposition~\ref{prop:tv}'s first inequality to this bounded function and the distributions $P_{A,\sigma}$ and $P_B$ gives
\[
 \left\lvert\E_{A,\sigma}\delta-\E_B\delta\right\rvert
 \leq\TV(P_{A,\sigma},P_B).
\]
Dropping the absolute value in the direction required by the proposition yields
\[
 \operatorname{TPR}_{\sigma}(\delta)-\operatorname{FPR}(\delta)
 =\E_{A,\sigma}\delta-\E_B\delta
 \leq\TV(P_{A,\sigma},P_B).
\]
Because the inequality holds pointwise in $\sigma$, every detector satisfies
\[
 \inf_{\sigma\in\Sigma_A}
 \left\{\operatorname{TPR}_{\sigma}(\delta)-\operatorname{FPR}(\delta)\right\}
 \leq
 \inf_{\sigma\in\Sigma_A}\TV(P_{A,\sigma},P_B).
\]
The right-hand side does not depend on $\delta$, so taking the supremum over detectors proves the robust bound.

For a fixed $\sigma$, if detection is perfect and produces no false alarms, then the left-hand side equals one. Because total variation between probability measures cannot exceed one, the bound forces $\TV(P_{A,\sigma},P_B)=1$. This equality holds exactly when $P_{A,\sigma}$ and $P_B$ are mutually singular.

Conversely, mutual singularity provides a measurable set $S$ with $P_{A,\sigma}(S)=1$ and $P_B(S)=0$. The detector $\delta(x)=\mathbf 1\{x\in S\}$ therefore has true-positive rate one and false-positive rate zero. Finally, if $P_{A,\sigma_0}=P_B$ for some $\sigma_0$, then $\E_{A,\sigma_0}\delta=\E_B\delta$ for every detector. Hence every detector has $\operatorname{TPR}_{\sigma_0}=\operatorname{FPR}$, which bounds robust separation above by zero. The constant detector $\delta\equiv0$ attains zero against every strategy and supplies the matching lower bound. Thus the robust separation value equals zero.
\end{proof}

\begin{proof}[Proof of Corollary~\ref{cor:hitting}]
\emph{Necessity.}
Suppose a deployable zero-interference, complete-coverage set $K$ exists. Because $K\subseteq\mathcal D$, zero interference gives $K\subseteq\mathcal D\setminus U_B$. Now fix an attack path $a\in\mathcal A$. Complete coverage gives an event in $K\cap a$, and that event belongs to $a\cap(\mathcal D\setminus U_B)$. Thus every attack path contains a deployable event outside $U_B$.

\emph{Sufficiency.}
Conversely, suppose $a\cap(\mathcal D\setminus U_B)\neq\varnothing$ for every $a\in\mathcal A$, and set $K=\mathcal D\setminus U_B$. Every benign path is a subset of $U_B$, so no benign path meets $K$. The maintained condition gives $a\cap K\neq\varnothing$ for every attack path $a\in\mathcal A$. Hence $K$ is deployable and has zero benign interference and complete coverage.
\end{proof}

\newpage

\bibliographystyle{apalike}
\bibliography{references-agentstalk}

\end{document}